\theoremstyle{plain}
\newtheorem{theorem}{Theorem}
\newtheorem{lemma}[theorem]{Lemma}
\newtheorem{corollary}[theorem]{Corollary}
\newtheorem{observation}[theorem]{Observation}
\newtheorem{claim}[theorem]{Claim}
\theoremstyle{definition}
\newtheorem{conjecture}[theorem]{Conjecture}
\title{Regular graphs are antimagic}
\author{
Krist\'of B\'erczi\thanks{MTA-ELTE Egerv\'ary Research Group, Department of Operations Research, E\"otv\"os University, Budapest, Hungary. E-mail: {\tt berkri@cs.elte.hu}.}
\and
Attila Bern\'ath\thanks{MTA-ELTE Egerv\'ary Research Group, Department of Operations Research, E\"otv\"os University, Budapest, Hungary. E-mail: {\tt bernath@cs.elte.hu}.}
\and
M\'at\'e Vizer\thanks{MTA Alfr\'ed R\'enyi Institute of Mathematics, P.O.B. 127, Budapest H-1364, Hungary. Email:
{\tt vizermate@gmail.com}}}
\begin{document}

\maketitle


\begin{abstract}
An undirected simple graph $G=(V,E)$ is called antimagic if there exists an injective function $f:E\rightarrow\{1,\dots,|E|\}$ such that $\sum_{e\in E(u)} f(e)\neq\sum_{e\in E(v)} f(e)$ for any pair of different nodes $u,v\in V$. In \cite{BBV15}, the authors gave a proof that regular graphs are antimagic. However, the proof of the main theorem is incorrect as one of the steps uses an invalid assumption. The aim of the present erratum is to fix the proof.

  \bigskip\noindent \textbf{Keywords:} antimagic labelings; regular graphs 
\end{abstract}

\section{Introduction}

Throughout the note graphs are assumed to be simple. An undirected simple graph $G=(V,E)$ is called \textbf{antimagic} if there exists an injective function $f:E\rightarrow\{1,\dots,|E|\}$ such that $\sum_{e\in E(u)} f(e)\neq\sum_{e\in E(v)} f(e)$ for any pair of different nodes $u,v\in V$. 

Hartsfield and Ringel conjectured \cite{HR1990} that all connected graphs on at least 3 nodes are antimagic.  The conjecture has been verified for several classes of graphs, but it is widely open in general. Cranston et al. \cite{cranston} verified that regular graphs of odd degree are antimagic. With a slight modification of their argument, the authors gave a proof that even regular graphs are also antimagic in \cite{BBV15}. Recently, Chang, Liang, Pan and Zhu observed that the proof of the main theorem in \cite{BBV15} is incorrect: in the proof of Claim 6 (page 5), case 2 assumes that $f(e)>\ell$ for every $e\in E(v_{i-1})-E'_i$. However, this assumption does not hold for edges in $E^\sigma_i$, thus the subsequent calculations are incorrect.

The aim of the present note is to fix this issue. As the odd regular case was settled in \cite{cranston}, we concentrate on $k$ being even. Given a bipartite graph $G=(S,T;E)$, a path $P=\{uv,vw\}$ of length 2 with $u,w\in S$ is called an \textbf{$S$-link}. In \cite{Liang13}, Liang proposed the following conjecture and showed that, if it is true, the conjecture implies that $4$-regular graphs are antimagic.

\begin{conjecture} \label{conj:liang}
Let $G=(S,T;E)$ be a bipartite graph such that each node in $S$ has degree at most $4$ and each node in $T$ has degree at most $3$. Then $G$ has a matching $M$ and a family $\mathcal{P}$ of node-disjoint $S$-links such that every node $v\in T$ of degree $3$ is incident to an edge in $M\cup(\bigcup_{P\in\mathcal{P}} P)$.
\end{conjecture}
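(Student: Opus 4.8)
My plan is to recast the statement as a matroid–partition problem and settle it by matroid union, with the degree hypotheses entering through a single counting inequality.

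A reading has to be fixed first: the matching $M$ and the links of $\mathcal{P}$ must be allowed to share nodes of $S$ (only $M$ internally, and $\mathcal{P}$ internally, are node-disjoint). If one instead required $M$ and $\bigcup_{P}P$ to be jointly node-disjoint the statement would be false: take four degree-$3$ nodes of $T$ all adjacent to the same three nodes of $S$ (each then of degree $4$); covering four nodes by node-disjoint edges and $S$-links needs four distinct nodes of $S$, yet only three exist. So I read the conjecture with $S$-sharing permitted. Under this reading I reduce to the bipartite graph $G'$ spanned by $S$ and the set $T_3$ of degree-$3$ nodes of $T$ (edges at nodes of $T\setminus T_3$ never help coverage, so discard them). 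Given any valid pair $(M,\mathcal{P})$ I delete every matching edge incident to a link-centre; then $T_3$ splits as $T_M\sqcup T_P$, where $T_M$ is saturated by the matching and each node of $T_P$ is a link-centre. Hence the conjecture is equivalent to the existence of a partition $T_3=T_M\sqcup T_P$ with $T_M$ \emph{$1$-matchable} (saturated by a matching of $G'$) and $T_P$ \emph{$2$-matchable} (each node gets two of its $S$-neighbours, all $2|T_P|$ distinct).

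The $1$-matchable subsets of $T_3$ are the independent sets of the transversal matroid $\mathcal{M}_1$ of $G'$, and the $2$-matchable subsets form a matroid $\mathcal{M}_2$ as well. Granting this, the desired partition exists exactly when $T_3$ is a union of an $\mathcal{M}_1$-independent set and an $\mathcal{M}_2$-independent set (a covering union may be thinned to a disjoint partition), which by the matroid union theorem holds if and only if
\[
|X|\le r_1(X)+r_2(X)\qquad\text{for every } X\subseteq T_3,
\]
with $r_1,r_2$ the rank functions. Here the degree bounds do the work: each node of $T_3$ has exactly three neighbours and each node of $S$ at most four, so counting the edges of $G'$ meeting $Z\subseteq T_3$ gives $3|Z|\le 4|N(Z)|$, i.e. $|N(Z)|\ge\tfrac34|Z|$. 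The defect formulas $r_1(X)=|X|-\max_{Z\subseteq X}(|Z|-|N(Z)|)$ and $r_2(X)=|X|-\max_{W\subseteq X}(|W|-\lfloor |N(W)|/2\rfloor)$ then bound the two maxima by $\tfrac14|X|$ and $\tfrac58|X|+\tfrac12$, whence $r_1(X)+r_2(X)\ge 2|X|-\tfrac78|X|-\tfrac12=\tfrac98|X|-\tfrac12\ge|X|$ once $|X|\ge 4$; the cases $|X|\le 3$ are immediate since there $|N(Z)|\ge 3\ge|Z|$ for every nonempty $Z$, so $r_1(X)=|X|$.

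The numerical part is thus comfortable, the slack coming from $\tfrac34>\tfrac23$ (the threshold below which a cluster $X$ with $|N(X)|<\tfrac23|X|$ could not be covered at all). I expect the main obstacle to be the clean lemma that $2$-matchability defines a matroid $\mathcal{M}_2$, with the quoted defect rank formula. The difficulty is precisely the disjunction built into the problem: every node of $T_3$ must consume \emph{either} one matching-slot \emph{or} two link-slots of $S$, and the matching-slot and the link-slot of a given $S$-node carry separate unit capacities. This per-node one-or-two choice is what blocks a naive single-flow or $b$-matching model, whose relaxation admits \emph{spurious} solutions matching a node by one edge while attaching a single, half-formed link edge. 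I plan to remove this either through transversal/matching-matroid theory (realising $\mathcal{M}_2$ as a matroid induced by a bipartite graph and computing its rank by the standard K\"onig/defect argument) or, if no slick proof emerges, by a gadget reduction of the covering problem to an ordinary perfect matching in which each node of $T_3$ is joined to one gadget per candidate matching-neighbour and one per candidate link-pair, so that the forbidden mixed configurations cannot arise. Once $\mathcal{M}_2$ and its rank formula are secured, the steps above close the argument.
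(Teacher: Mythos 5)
First, a point of comparison: this paper does \emph{not} prove Conjecture~\ref{conj:liang} at all --- the authors explicitly leave its proof to the forthcoming paper \cite{BBV15b} --- so there is no proof here to measure you against, and I judge your proposal on its own merits. Your preliminary reductions are sound: the reading you fix is the right one (your $K_{3,4}$-type example correctly rules out the jointly node-disjoint reading), discarding $T\setminus T_3$ is legitimate, the reformulation as a partition $T_3=T_M\sqcup T_P$ into a $1$-matchable and a $2$-matchable set is correct, $\mathcal{M}_1$ is indeed a transversal matroid with the K\"onig defect formula, and your counting $|N(Z)|\ge\tfrac34|Z|$ together with the arithmetic closing the union inequality checks out --- \emph{conditional on your two lemmas about $\mathcal{M}_2$}.

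The step you yourself flag as the main obstacle, however, is not merely unproven: it is false, even under exactly the degree hypotheses of the conjecture. Take $S=\{1,\dots,7\}$ and $T=T_3=\{x_1,x_2,y_1,y_2,y_3\}$ with $N(x_1)=\{1,2,3\}$, $N(x_2)=\{2,3,4\}$, $N(y_1)=\{1,2,5\}$, $N(y_2)=\{3,4,6\}$, $N(y_3)=\{1,3,7\}$; every $T$-node has degree exactly $3$ and every $S$-node degree at most $4$. Then $X=\{x_1,x_2\}$ is $2$-matchable ($x_1\mapsto\{1,2\}$, $x_2\mapsto\{3,4\}$) and $Y=\{y_1,y_2,y_3\}$ is $2$-matchable ($y_1\mapsto\{1,5\}$, $y_2\mapsto\{4,6\}$, $y_3\mapsto\{3,7\}$), yet $|N(X\cup\{y_i\})|=5<6$ for each $i$, so no $X+y_i$ is $2$-matchable: the exchange axiom fails, and the $2$-matchable sets do \emph{not} form a matroid. (Via the doubled-Hall criterion, $X'$ is $2$-matchable iff $2|W|\le|N(W)|$ for all $W\subseteq X'$; the function $\lfloor|N(\cdot)|/2\rfloor$ is not submodular, and the example above realises precisely such a submodularity violation.) Consequently the matroid union theorem is inapplicable and the plan collapses at its core, not at a fixable lemma: your first fallback (realising $\mathcal{M}_2$ through transversal or induced-matroid theory) cannot succeed because the family is provably non-matroidal, and your second fallback (a gadget reduction to perfect matching) would at best \emph{reformulate} the problem --- you would still have to prove that the perfect matching exists by a Tutte- or Hall-type analysis of the gadget graph, which is exactly where the content of the conjecture lives. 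The per-node one-or-two disjunction you correctly identify as the difficulty is genuinely non-matroidal, so a different idea is required to handle the interaction between $T_M$ and $T_P$; your partition reformulation and the $\tfrac34$-expansion bound are worth keeping, but they are the easy part.
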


In \cite{BBV16}, the authors verified the conjecture by introducing a restricted path packing problem in bipartite graphs. Instead of simply modifying the original proof of Cranston et al. \cite{cranston}, we combine it with the idea of Liang \cite{Liang13} that already worked for the $4$-regular case.  

It is important to mention that at the same time when paper \cite{BBV15} appeared, regular graphs were proved to be antimagic by Chan et al. \cite{Liang16}. However, as our paper received several citations we felt that we should fix the problem appearing in the proof. Hence in the sequel we prove the following.

\begin{theorem}\label{thm:main}
Even regular graphs are antimagic.
\end{theorem}

\section{Preliminaries}

In order to make the note self-contained, we quickly go through the basic definitions. Recall that $k$ is assumed to be even. Moreover, the case $k=2$ is trivial while the case $k=4$ was settled in \cite{BBV16}, hence we concentrate on $k\geq 6$. 

Given an undirected graph $G=(V,E)$ and a subset of edges $F\subseteq E$, $F(v)$ denotes the set of edges in $F$ incident to node $v\in V$, and $d_F(v):=|F(v)|$ is the \textbf{degree} of $v$ in $F$. A \textbf{labeling} is an injective function $f:E\to \{1, 2, \dots,|E|\}$. Given a labeling $f$ and a subset of edges $F$, let $f(F)=\sum_{e\in F}f(e)$. A labeling is \textbf{antimagic} if $f(E(u))\ne f(E(v))$ for any pair of different nodes $u,v\in V$. A graph is said to be \textbf{antimagic} if it admits an antimagic labeling.

Let us recall the following folklore result from matching theory that will be used below.

\begin{theorem}\label{thm:max}
In a bipartite graph there exists a matching that covers every node of maximum degree.
\end{theorem}

We will also build upon the following theorem.

\begin{theorem}\label{thm:half}
Let $G=(S,T;E)$ be a bipartite graph and $T=T_1\cup T_2$ be a partition of $T$. For a set $X\subseteq S$ let $N_i(X)$ denote the neighbours of $X$ in $T_i$ ($i=1,2$). If $\lceil|N_1(X)|/2\rceil+|N_2(X)|\geq |X|$ for all $X\subseteq S$, then there exists a matching covering $S$ that covers at most $\lceil|T_1|/2\rceil$ nodes from $T_1$. 
\end{theorem}
\begin{proof}
Extend the graph by adding a set $S'$ of new nodes to $S$ with $|S'|=\lfloor|T_1|/2\rfloor$ together with a complete bipartite graph between $T_1$ and $S'$. We claim that the resulting bipartite graph has a matching covering $S\cup S'$. This would prove the theorem as deleting the newly added edges from such a matching results in a matching covering $S$ that covers at most $|T_1|-\lfloor|T_1|/2\rfloor=\lceil|T_1|/2\rceil$ nodes of $T_1$.

By Hall's theorem it is enough to show that for every set $Y\subseteq S\cup S'$, $|N(Y)|\geq|Y|$ holds where $N(Y)$ denote the neighbours of $Y$. It suffices to verify the inequality for $Y$'s satisfying either $Y\subseteq S$ or $S'\subseteq Y$. Indeed, if $Y\cap S'\neq\emptyset$ then for $Y'=Y\cup S'$ we have $N(Y')=N(Y)$ and $|Y'|\geq |Y|$, thus giving a more strict constraint.

If $Y\subseteq S$, then the inequality holds by the assumptions of the theorem. If $S'\subseteq Y$, then $Y=S'\cup X$ for some $X\subseteq S$, and $|N(Y)|=|N(S'\cup X)|=|T_1|+|N_2(X)|=|S'|+\lceil|T_1|/2\rceil+|N_2(X)|\geq |S'|+\lceil|N_1(X)|/2\rceil+|N_2(X)|\geq |S'|+|X|=|Y|$, concluding the proof.
\end{proof}

Another tool that our proof relies on is a theorem that appeared in \cite[Corollary 9]{BBV16} in a more general form (formulated using hypergraph terminology).


\begin{theorem} \label{thm:help}
Let $G=(U,W;E)$ be a bipartite graph and $k$ be a positive even integer. Assume that each node in $W$ has degree $k-1$ and  $d_G(u)\le k$ for every $u\in U$. Then there exists a family of pairwise node-disjoint stars $(w_1,U_1;F_1),\dots,(w_q,U_q;F_q)$ such that $w_i\in W$, $|U_i|$ is either even or $k-1$, and each node $u\in U$ of degree $k$ is covered by one of the stars. 
\end{theorem}

Let $G=(U,W;E)$ be a bipartite graph. A path $P=\{u'w,wu''\}$ of length 2 with $u',u''\in U$ is called a \textbf{$U$-link}. The \textbf{center node} of the $U$-link is $w$. Based on Theorem~\ref{thm:help}, we give the following generalization of Liang's conjecture.

\begin{theorem}\label{thm:link}
Let $G=(U,W;E)$ be a bipartite graph and $k$ be a positive even integer. Assume that each node in $U$ has degree at most $k$ and each node in $W$ has degree at most $k-1$. Then $G$ has a matching $M$ and a family $\mathcal{P}$ of node-disjoint $U$-links with center nodes having degree $k-1$ such that every node $w\in W$ of degree $k-1$ is incident to an edge in $M\cup(\bigcup_{P\in\mathcal{P}} P)$.
\end{theorem}

\begin{proof}
Observe that it suffices to verify the theorem for the special case when each node in $W$ has degree exactly $k-1$ as we can simply delete nodes of degree less than $k-1$. Let $U'\subseteq U$ denote the set of nodes having degree $k$. Consider a family of stars provided by Theorem \ref{thm:help}. The union of the edges of the stars is denoted by $F=\bigcup_{i=1}^q F_i$. Let $W'$ be the set of nodes in $W$ not covered by $F$. As $d_{E-F}(u)\leq k-1$ for each $u\in U$, $W'$ can be covered by a matching $M$ disjoint from $F$, by Theorem \ref{thm:max}.

Now we trim each star either into a matching edge or into an $U$-link. If $M$ covers at most one node from $U_i$, then keep only one edge $w_iu\in F_i$ where $u$ is not covered by $M$ (such an edge exists as $|U_i|\geq 2$). If $M$ covers at least two nodes from $U_i$, then keep two edges $w_iu',w_iu''\in F_i$ where both $u'$ and $u''$ are covered by $M$. This way we get a matching and a family of $U$-links whose union together covers $W$.
\end{proof}

As a consequence, we can give a special partition of the edges of a bipartite graph.

\begin{theorem}\label{thm:part}
Let $G=(U,W;E)$ be a bipartite graph and $k$ be a positive even integer. Assume that $1\le d_G(u)\le k$ for each node $u\in U$ and each node in $W$ has degree at most $k-1$. Then $E$ can be partitioned into three pairwise disjoint parts $E=E'\cup E^\sigma\cup E^L$ satisfying the following conditions:
\begin{enumerate}[(i)]
\item each node in $U$ has degree one in $E^\sigma$, that is, $E^\sigma$ is the union of pairwise node-disjoint stars with center nodes in $W$ together covering $U$,
\item $E^L$ is the union of pairwise node-disjoint $U$-links with center nodes having degree $k-1$ in $G$,
\item $E^\sigma\cup E^L$ covers each node in $W$ of degree $k-1$.   
\end{enumerate}
\end{theorem}

\begin{proof}
Take a matching $M$ and a family $\mathcal{P}$ of node-disjoint $U$-links provided by Theorem~\ref{thm:link}. Add $M$ to $E^\sigma$, and for each node $u\in U$ not covered by $M\cup(\bigcup_{P\in\mathcal{P}} P)$ add an arbitrary edge incident on $u$ to $E^\sigma$. Let $E^L$ consist of the edges of those $U$-links in $\mathcal{P}$ whose center nodes are not covered by $E^\sigma$. Finally, set $E'=E \setminus (E^\sigma\cup E^L)$. The partition $E=E'\cup E^\sigma\cup E^L$ thus obtained satisfies the conditions of the theorem.
\end{proof}

A \textbf{trail} in a graph $G=(V,E)$ is an alternating sequence of nodes and edges $v_0, e_1, v_1,\allowbreak \dots,\allowbreak e_t, v_t$ such that $e_i$ is an
edge connecting $v_{i-1}$ and $v_i$ for $i=1, 2, \dots, t$, and the edges are all distinct (but there might be repetitions among the nodes). The trail is \textbf{open} if $v_0\ne v_t$, and \textbf{closed} otherwise. A closed trail is also known as an Eulerian trail. 
We will say that $e_1$ and $e_t$ are the \textbf{terminal edges} of an (open or closed) trail, while $v_0$ and $v_t$ are the \textbf{terminal nodes}. The \textbf{length} of a trail is the number of edges in it. 

\begin{claim}\label{cl:help}
Given a connected graph $G=(V,E)$, let $O=\{v\in V: d_E(v) $ is odd$
\}$. If $O\ne \emptyset$, then $E$ can be partitioned into $|O|/2$ open trails.
\end{claim}
\begin{proof}
Note that $|O|$ is even.
Arrange the nodes of $O$ into pairs in an arbitrary manner and add a
new edge between the members of every pair. Take an Eulerian trail of
the resulting graph and delete the new edges to get $|O|/2$ open trails.
\end{proof}

\begin{claim}\label{cl:help2}
If each node of a connected graph $G=(V,E)$ has even degree, then $E$ is a closed trail.
\end{claim}
\begin{proof}
A closed trail containing every edge of the graph is basically an Eulerian trail. It is well known that a graph has an Eulerian trail if and only if it is connected and every node has even degree.
\end{proof}

The main advantage of Claims~\ref{cl:help} and \ref{cl:help2} is that the edge set of the graph can be partitioned into open and closed trails such that the closed trails form connected components of the graph, while at most one open trail
starts at every node of $V$.

\begin{corollary}\label{cor:help}
Given a bipartite graph $G=(S,T;E)$, $E$ can be partitioned into trails $T_1,\dots, T_\ell$ such that $T_i$ forms a connected component of $G$ if it is closed, and the endpoints of odd trails $T_i $ and $T_j$ are different if $i\ne j$.
\end{corollary}

\section{Proof of Theorem \ref{thm:main}} \label{sec:main}

In what follows we prove that $k$-regular graphs are antimagic for $k\geq 2$. The odd regular case was previously settled in \cite{cranston}, the case $k=2$ is trivial, and the case $k=4$ was solved in \cite{BBV16}. Hence we assume that $k$ is even and is at least $6$.

Note that it suffices to prove the theorem for connected regular graphs. Let $G=(V,E)$ be a connected $k$-regular graph and let $v^*\in V$ be an arbitrary node. Denote the set of nodes at distance exactly $i$ from $v^*$ by $V_i$ and let $q$ denote the largest distance from $v^*$. We denote the edge-set of $G[V_i]$ by $E_i$. Apply Theorem~\ref{thm:part} and Corollary~\ref{cor:help} to the induced bipartite graph $G[V_{i-1}, V_i]$  with $W=V_{i-1}$ and $U=V_i$ to get a partition $E'_i,E_i^\sigma$ and $E^L_i$ together with a trail decomposition of $E'_i$ for every $i=1,\dots,q$. Note that the BFS tree we started with makes sure that there are no isolated nodes in $U$ and the degree of a node $w\in W$ is at most $k-1$ in $G[V_{i-1}, V_i]$.

We call a connected component $C$ of $E'_i$ \textbf{critical}, if $C$ is $(k-2)$-regular
and every node in $C\cap V_i$ is covered by $E_i^L$. Note that a critical component forms a closed trail.

\begin{claim} \label{cl:sol}
We can assign a $V_i$-link $\{u'v,vu''\}$ to each critical component $C$ with $u'\in C\cap V_i$ in such a way that the following holds. 
\begin{enumerate}
\item Different critical components get different $V_i$-links.
\item No open trail ends in the center nodes of two different $V_i$-links assigned to critical components.
\item If $n_o$ denotes the number of odd open trails in $E'_i$, then at most $\lceil n_o/2\rceil$ of the odd open trails end in the set of center nodes of $V_i$-links assigned to critical components.
\end{enumerate}
\end{claim}
\begin{proof}
We construct a bipartite graph as follows. One of the color classes, denoted by $S$, corresponds to the critical components of $E'_i$. The other color class, denoted by $T$, corresponds to the $V_i$-links of $E^L_i$ modulo open trails, that is, if the center nodes of two $V_i$-links form the terminal nodes of the same open trail then they are represented by the same node in the bipartite graph. We add an edge between a node corresponding to a critical component $C$ and a node representing a $V_i$-link $\{u'v,vu''\}$ if $u'\in C$.

Let $T=T_1\cup T_2$ where $T_1$ corresponds to those $V_i$-links whose center nodes are terminal nodes of odd open trails. Let $X$ be a subset of the nodes representing the critical components. We claim that the assumption of Theorem~\ref{thm:half} is satisfied, that is, $\lceil|N_1(X)|2\rceil+|N_2(X)|\geq|X|$ holds. 

Recall that a critical component $C$ corresponds to $(k-2)$-regular subgraphs in which every node in $C\cap V_i$ is covered by a $V_i$-link. As $k-2\geq 4$ and a $V_i$-link uses two edges, there are at least $2|X|$ many $V_i$-links incident to the critical components in $X$. Due to the construction of the bipartite graph, some of these $V_i$-links might be represented by the same node in $T$ (if the center nodes of two $V_i$-links form the terminal nodes of the same open trail). Let $m_1$ denote the number of $V_i$-links whose center node is the terminal node of an odd open trail, and let $m_2$ be the number of the remaining ones. Then $\lceil|N_1(X)|2\rceil+|N_2(X)|\geq \lceil m_1/2\rceil+m_2/2\geq (m_1+m_2)/2\geq|X|$ as requested.

By applying Theorem~\ref{thm:half} to the bipartite graph constructed above, we get a matching which corresponds to an assignment satisfying the conditions of the theorem, concluding the proof. 
\end{proof}

$V_i$-links assigned to critical components are called \textbf{deficient}, and we will refer to their center nodes also as \textbf{deficient nodes}. The node $u'$ and edge $u'v$ appearing in Claim~\ref{cl:sol} are called the \textbf{core node} and the \textbf{core edge} of the critical component $C$, respectively. 

The \textbf{starting node} of a closed trail is defined as follows. If the trail is a critical component, then the starting node is set to be the core node of the component. If the trail is not a critical component and has a node $v\in V_i$ with $d_{E^L_i}(v)=0$, then set the starting node to be such a node. Otherwise, set the starting node to be an arbitrary node of the trail with degree at most $k-3$. 

In what follows, we state the algorithm that provides a labeling of the graph. We reserve the $|E_q|$ smallest labels for labeling $E_q$, the next $|E_q'|+|E_q^L|$ smallest labels for labeling $E'_q\cup E_q^L$, the next $|E_q^\sigma|$ smallest labels for labeling $E_q^\sigma$, the next $|E_{q-1}|$ smallest labels for labeling $E_{q-1}$, etc. We assume that we are given a trail decomposition of $E_i'$ into a set $\mathcal{T}$ of trails together with $V_i$-links assigned to critical trails as in Claim~\ref{cl:sol} for $i=1,\dots,q$. We label the edge-sets in order $$E_q\rightarrow E'_q\rightarrow E_q^L\rightarrow E_q^\sigma\rightarrow E_{q-1}\rightarrow\dots \rightarrow E_2^\sigma\rightarrow E_1\rightarrow E'_1\rightarrow E_1^L\rightarrow E_1^\sigma.$$

For $i>0$, assume that $|E^L_i|=a$, the number of critical components in $E'_i$ is $n_i$, and that the edges of $E'_i\cup E^L_i$ are labeled using the interval $[s,\ell+a]$ (that is, $|E'_i|=\ell-s+1$). We will use the intervals $[\lceil(s+\ell)/2\rceil,\lceil(s+\ell)/2\rceil+n_i-1]\cup [\ell+a-n_i+1,\ell+a]$ for labeling the deficient $V_i$-links of $E^L_i$. The edges of the non-deficient $V_i$-links are labeled by using labels from $[\ell+n_i+1,\ell+a-n_i]$ (note that $a \ge 2n_i$). The edges of the trails appearing in the decomposition of $E'_i$ are labeled by using labels from $[s,\lceil(s+\ell)/2\rceil-1]\cup[\lceil(s+\ell)/2\rceil+n_i,\ell+n_i]$ (see Figure~\ref{fig:ints}).

\begin{figure}
    \centering
    \includegraphics[width=\textwidth]{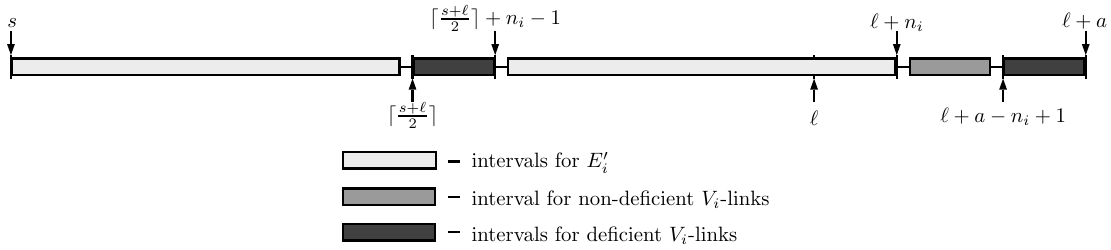}
    \caption{Assigning the intervals to $E'_i$ and $E^L_i$.}
    \label{fig:ints}
\end{figure}

\medskip

\textbf{Step 1.} Labeling the edges in $E_i$.

\noindent We label the edges of $E_i$ arbitrarily from its dedicated interval.
\medskip

\textbf{Step 2.} Labeling trails.

\newcommand{\labtr}{LabelOneTrail}

We initialize $I_1=\{s, s+1, \dots, \lceil(s+\ell)/2\rceil-1\}$
and $I_2=\{\lceil(s+\ell)/2\rceil + n_i, \lceil(s+\ell)/2\rceil +n_i + 1, \dots, \ell+n_i\}$. Notice that $|I_1|\le |I_2|\le |I_1|+1$ holds for the initial setup.
We will use the subroutine \textsf{\labtr}\ (see Algorithm \ref{alg:labtr}) for labeling one trail.
\begin{algorithm}
    \SetKwInOut{Input}{Input}
    \SetKwInOut{Output}{Output}

    \Input{A trail $T = v_0, e_1, v_1, \dots, e_t, v_t$ with a designated starting node $v_0$.}
    \Output{A labeling of $T$.}
    
  Assume that $ I_1 = \{a_1, a_1+1,\dots, b_1\}$ and $I_2= \{a_2, a_2+1, \dots, b_2\}$ are the  available intervals for labeling.

    \eIf{$v_0\in V_{i-1}$}
      {
        label $e_1, e_2, \dots, e_t$ with the labels
  $a_1, b_2, a_1+1, b_2-1, \dots$\;
      }
      {
        label $e_1, e_2, \dots, e_t$ with the labels
  $b_2, a_1, b_2-1, a_1+1, \dots$\;
      }

   Remove the used labels from $I_1$ and $I_2$.
    \caption{\labtr $( v_0, e_1, v_1, \dots, e_t, v_t)$}\label{alg:labtr}
\end{algorithm}

When labeling the trails, we want to make sure that \textit{deficient nodes do
not get a small label}. This means the following: if $v$ is deficient then the trail $T$ that ends in $v$ will be labeled such that $v$ is the final node, and not the starting one, thus the terminal edge of $T$ at $v$ will get a label from $I_2$.
The labeling of the trails is done as follows.

\textbf{Step 2a.} While there is a not yet labeled closed trail $T= v_0, e_1, v_1, \dots, e_{2t},
v_{2t}$ with starting node $v_0$, label it by calling
\textsf{\labtr($v_0, e_1, v_1, \dots, e_{2t}, v_{2t}$)}. Notice that $|I_1|\le |I_2|\le |I_1|+1$ is maintained after this call.

\textbf{Step 2b.} While there exists a not yet labeled open even trail, take one such
trail $T= v_0, e_1, v_1, \dots, e_{2t}, v_{2t}$. By Claim \ref{cl:sol},
we can assume that $v_0$ is not deficient. Label $T$ by calling
\textsf{\labtr($v_0, e_1, v_1, \dots, e_{2t}, v_{2t}$)}. Again notice that $|I_1|\le |I_2|\le |I_1|+1$ is maintained after this call.

\textbf{Step 2c.} If all even trails are labeled then create pairs of the odd trails in
an arbitrary manner with the only restriction that at most one terminal node of the members of the pair can be deficient. This can be done since $n_n \ge n_d
- 1$ by Claim \ref{cl:sol}, where $n_d$ denotes the number of odd open
trails having a deficient terminal node, while $n_n$ denotes the number of
odd open trails having no deficient terminal node. If the number of odd
trails is odd then one trail will have no pair, and if $n_d=n_n+1$
then this trail can have a deficient terminal node. Label first the pairs
as follows. Let $T=v_0, e_1, v_1, \dots, e_{2t+1}, v_{2t+1}$ and
$T'=v'_0, e'_1, v'_1, \dots, e'_{2t'+1}, v'_{2t'+1}$ be an arbitrary
pair with $v_0\in V_i$ and $v'_0\in V_{i-1}$ where we assume that
$v'_0$ is not deficient (that is, $v_{2t+1}$ might be deficient). Call
first \textsf{\labtr($v_0, e_1, v_1, \dots, e_{2t+1}, v_{2t+1}$)} and next
\textsf{\labtr($v'_0, e'_1, v'_1, \dots, e'_{2t'+1}, v'_{2t'+1}$)} for
labeling this pair. Notice that $|I_1|\le |I_2|\le |I_1|+1$ is maintained after these  two calls.
Finally, if there is a single trail $T=v_0, e_1,
v_1, \dots, e_{2t+1}, v_{2t+1}$ that is not yet labeled then label it
by calling \textsf{\labtr($v_0, e_1, v_1, \dots, e_{2t+1}, v_{2t+1}$)}
where $v_0\in V_i$ is assumed (and $v_{2t+1}$ is either deficient or non-deficient).

\medskip

\textbf{Step 3.} Labeling deficient $V_i$-links.

\noindent Recall that deficient links are labeled  using the intervals $[\lceil(s+\ell)/2\rceil,\lceil(s+\ell)/2\rceil+n_i-1]\cup [\ell+a-n_i+1,\ell+a]$. 
In an arbitrary order, take the next deficient $V_i$-link $\{u'v,vu''\}$ and assume that the core edge is  $u'v$. Label $u'v$ with the smallest available label, and $vu''$ with the largest available label. This scheme makes sure that the sum of the labels on the link is $\lceil\frac{s+\ell}{2}\rceil + \ell+a$.
\medskip

\textbf{Step 4.} Labeling non-deficient $V_i$-links.

\noindent The edges of the non-deficient $V_i$-links are labeled by using labels from $[\ell+n_i+1,\ell+a-n_i]$ (note that $a \ge 2n_i$). 
In an arbitrary order, take the next non-deficient $V_i$-link $\{u'v,vu''\}$  and label $u'v$ with the smallest available label, and $vu''$ by the largest available label. This scheme makes sure that the sum of the labels on the link is $2\ell+a+1$.

\medskip

\textbf{Step 5.} Labeling the edges in $E_i^\sigma$.

\noindent For any node $v\in V_i$ ($i>0$), let $\sigma(v)$ denote the unique edge of $E^\sigma_i$ incident to $v$ and let $p(v)=f(E(v))-f(\sigma(v))$. Note that we have already labeled $E_q,E'_q,E_q^L, E_q^\sigma,\dots,E_{i},E'_i,\allowbreak E_i^L$, hence $p(v_i)$ is already determined for every $v_i\in V_i$. So we order the nodes of $V_i$ in an increasing order according to their $p$-value and assign the label
to their $\sigma$ edge in this order. This ensures that $f(E(u))\ne f(E(v))$ for an arbitrary pair $u,v\in V_i$.
\medskip

We have fully described the labeling procedure. This labeling scheme ensures that $f(E(v_i))< f(E(v_j))$ if $v_i\in V_i, v_j\in V_j$ and $i\ge j+2$ since $G$ is regular and the edges in $E(v_j)$ get larger labels than those in $E(v_i)$. Similarly,
$f(E(v^*))> f(E(v))$ for every $v\in V-v^*$ for the same reason.  It is only left to show that $f(E(v_i))\ne f(E(v_{i-1}))$ for arbitrary $v_i\in V_i, \  v_{i-1}\in V_{i-1}$ and $i\ge 2$.

To prove this, first we collect the observations that are true for this labeling and will be  used later. For the subsequent proofs we introduce the following notation. If $v\in V_{i-1}\cup V_i$ then let $p^L(v)=\sum_{e\in E_i^L \cap E(v)}f(e)$, $p'(v)=\sum_{e\in E'_i \cap E(v)}f(e)$ and $p(v)=\sum_{e\in E(v)-\sigma(v)}f(e)$. 

\begin{observation}\label{obs:i-1}
Let $v\in V_{i-1}$. 
\begin{enumerate}[(a)]
    \item \label{cl:succ:i-1} Successive labels on any trail incident to $v$ have sum at least $s+\ell+n_i$.
\item \label{cl:start} If $d_{E_i'}(v)$ is odd then $f(e)\ge s+n_i$ for the edge $e\in E(v)\cap E_i'$ that is the terminal edge of a trail. (This holds because we first labeled the closed trails, that includes all the critical trails.)
\item \label{cl:startdef} If $v$ is deficient (in which case $d_{E_i'}(v)=k-3$) then $f(e)\ge \lceil \frac{s+l}{2}\rceil+n_i$ for the edge $e\in E(v)\cap E_i'$ that is the terminal edge of a trail.
\end{enumerate}
\end{observation}

\begin{observation}\label{obs:i}
Let $v\in V_{i}$. 
\begin{enumerate}[(a)]
    \item \label{cl:succi} Successive labels on any trail incident to $v$ have sum at most $s+\ell+n_i$.
    \item \label{cl:fl} If $v$ is the starting node of a closed trail then the sum of the labels on the terminal edges of the trail is at most $s+\ell+n_i+(\ell-s)/2$. 
    \item \label{cl:core} If $v$ is a core node then $p^L(v)\le (s+\ell)/2 + n_i$.
\end{enumerate}
    
\end{observation}

\begin{lemma}\label{lemma:main:i-1}
For arbitrary  $v\in V_{i-1}$ and $i\geq 2$ we have $p(v)\geq(k-2)/2(s+\ell+n_i)+\ell+a$.
\end{lemma}
\begin{proof}
The idea of the proof is the following. Since $p(v)=\sum_{e\in E(v)-\sigma(v)}f(e)$ is the sum of $k-1$ edge-labels, we will pair the edges in this sum (except for one) such that the sum of the labels in each pair is $\ge s+\ell+n_i$, while the  bound $f(e)\ge \ell+a$ will be applied for the remaining edge that does not have a pair. This idea will work in almost all of the cases below.

The edges in $E_i'$ that are subsequent on a trail are naturally paired with each other by Observation \ref{obs:i-1}\ref{cl:succ:i-1}. Furthermore, if two edges both get a label $\ge \ell+a$ then they can be paired with each other.

Notice that $d_{E_i'}(v)\le k-2$ holds for $v\in V_{i-1}$.

\textbf{Case 1:} There is no $V_i$-link at $v$. Notice that the edges in $E(v)-\sigma(v)$ either fall into $E_i'$ or get a label $\ge l+a$. If $d_{E_i'}(v)=k-2$ then our rule for choosing the starting node of a closed trail will not choose $v$, that is, all edges of $E_i\cap E(v)$ are paired by the trail.
So assume that $d_{E_i'}(v)<k-2$. In this case at least two edges get a label $\ge l+a$. If $d_{E_i'}(v)$ is odd then let $e$ be the only edge at $v$ that is not paired by a trail: we will pair it with an edge that has label $\ge \ell+a$ and apply the trivial lower bound $f(e)\ge s$. If $d_{E_i'}(v)$ is even then  it is at most $k-4$, so even if $v$ is the starting node of a closed trail, the two edges $e, e'$ that are not paired by the trail (terminal edges) can be paired by edges having labels $\ge \ell+a$.

\textbf{Case 2:} There is a $V_i$-link at $v$. In this case $d_{E_i'}(v)=k-3$. If $v$ is not deficient then $p^L(v)=2l+a+1$ and $p'(v)\ge s+n_i + (k-4)/2(s+l+n_i)$, by Observation \ref{obs:i-1}\ref{cl:start}. On the other hand, if $v$ is deficient then $p^L(v)=\lceil(s+\ell)/2\rceil + \ell+a$ and $p'(v)\ge \lceil (s+\ell)/2\rceil+n_i + (k-4)/2(s+\ell+n_i)$ by Observation \ref{obs:i-1}\ref{cl:startdef}., finishing the proof.
\end{proof}

\begin{lemma}\label{lemma:main:i}
For arbitrary  $v\in V_{i}$ and $i\geq 1$, we have $p(v)\leq\frac{k-2}{2}(s+\ell+n_i)+\ell+a$.
\end{lemma}
\begin{proof}
The idea of the proof is the the same as it was in Lemma \ref{lemma:main:i-1} with the only exception that we aim for an upper bound. That is, we pair all but one of the  $k-1$ edges that appear in the formula for $p(v)$ such that the sum of the labels in each pair is $\le s+l+n_i$, while the trivial bound $f(e)\le \ell+a$ will be applied for the remaining edge that does not have a pair. 

The edges in $E_i'$ that are subsequent on a trail are naturally paired with each other by Observation \ref{obs:i}\ref{cl:succi}. Furthermore, if two edges both get a label less than $s$ then they can be paired with each other.

\textbf{Case 1:} There is no $V_i$-link at $v$. Notice that the edges in $E(v)-\sigma(v)$ either fall into $E_i'$ or get a label $ < s$. If $d_{E_i'}(v)$ is odd then there is nothing to do: we apply $f(e)\le \ell + a$ for the edge $e\in E(v)$ that is the terminal edge of a trail, and the remaining edges are either paired by the trails or have label $<s$. If $d_{E_i'}(v)$ is even then it is at most $k-2$ and there is at least one edge $h\in E(v)$ having label $<s$. If $v$ is not the starting node of a trail then all the edges at $v$ are either paired by the trails or have label $<s$. If $v$ happens to be the starting node of a closed trail then let $e$ and $e'$ be the first and the last edge of the trail and observe that $f(e)+f(h) \le s+\ell+n_i$ while we can apply the trivial bound $f(e')\le \ell+a$.

\textbf{Case 2:} There is a $V_i$-link at $v$. If $v$ is a core node then apply Observation \ref{obs:i}\ref{cl:core} to get $p^L(v)\le \frac{s+\ell}{2}+n_i$ and Observation \ref{obs:i}\ref{cl:fl} to get $p'(v) \le \frac{k-2}{2}(s+\ell+n_i)+\frac{l-s}{2}$ giving  $p(v)\le \frac{k-2}{2}(s+\ell+n_i) + \ell + n_i\le \frac{k-2}{2}(s+\ell+n_i) + \ell +a$.
If $v$ is not a core node then the trivial bound $p^L(v)\le \ell+a$ can be applied for the $V_i$-link, since $v$ is either not a starting node in a trail (in which case all edges in $E_i'\cap E(v)$ are paired by the trails and $f(e)<s$ holds for every other  edge $e\in E(v)-\sigma(v)$). On the other hand if $v$ is the starting node of a trail then either $d_{E_i'}(v)$ is odd and the terminal edge of the trail can be paired with an edge with label $<s$, or $d_{E_i'}(v)$ is even, in which case there are at least 2 edges with label $<s$: pair those with the first and the last edge of the trail.
\end{proof}

The fact that $f(\sigma(v_i))<f(\sigma(v_{i-1})$ and Lemmas~\ref{lemma:main:i-1} and \ref{lemma:main:i} together yield $f(E(v_i))< f(E(v_{i-1}))$, finishing the proof of Theorem \ref{thm:main}.

\qed 

\section*{Acknowledgement}

The authors are grateful to Chang, Liang, Pan and Zhu for pointing out the gap in the original proof.


\end{document}